\documentclass[onecolumn]{revtex4-2}
\usepackage[
	bookmarks  = false,
	colorlinks = true,
	linkcolor  = blue,
	urlcolor   = blue,
	citecolor  = blue]{hyperref}
\usepackage{amsmath,amssymb,amsthm}
\usepackage{mathtools,braket}
\usepackage{graphicx,bm,bbm}
\usepackage[T1]{fontenc}
\usepackage{microtype}
\usepackage{orcidlink}
\newcommand{\appsection}[1]{\section{\MakeUppercase{#1}}}
\newtheorem{theorem}{Theorem}

\newtheorem{corollary}{Corollary}
\newtheorem{definition}{Definition}
\newtheorem{proposition}{Proposition}
\newtheorem{example}{Example}
\DeclareMathOperator{\Tr}{Tr}
\DeclareMathOperator{\PT}{PT}
\begin{document}
\title{$\bm{(2,m)}$-threshold quantum data hiding}
\author{Donghoon Ha\,\orcidlink{0000-0002-1698-9584}}
\affiliation{Department of Applied Mathematics and Institute of Natural Sciences, Kyung Hee University, Yongin 17104, Republic of Korea}
\author{Jeong San Kim\,\orcidlink{0000-0002-0420-8955}}
\email{freddie1@khu.ac.kr}
\affiliation{Department of Applied Mathematics and Institute of Natural Sciences, Kyung Hee University, Yongin 17104, Republic of Korea}
\begin{abstract}
We consider multiparty quantum state discrimination and present a multiparty quantum data-hiding scheme for one classical bit to be shared among multiple parties.
In the proposed scheme, any pair of parties can collaborate to perfectly recover the hidden bit through a joint measurement, whereas measurements based on local operations and classical communication(LOCC) performed even by all parties reveal only an arbitrarily small amount of information.
We further provide bounds on the optimal LOCC discrimination of multiparty quantum states. 
The proposed scheme can be implemented using only separable states of low-dimensional quantum systems, enhancing its practical feasibility.
\end{abstract}
\maketitle

\section{Introduction}\label{sec:int}
Data hiding is a communication protocol in which a specific piece of information is encoded into multiple shares and distributed among multiple users. 
No individual user can recover the hidden data from their own share alone, and recovering the hidden data requires a sufficient number of users to combine their shares. 
In classical data-hiding schemes, the hidden data is encoded into classical shares and can be reconstructed only through classical communication among the users. 
Therefore, classical data hiding fundamentally relies on restricting classical communication\cite{sham1979}.

Quantum data-hiding schemes provide a fundamentally stronger level of concealment than classical schemes\cite{terh2001,divi2002,egge2002}. 
In these schemes, classical data are encoded into quantum systems shared among multiple users. 
Unlike classical data-hiding schemes, the hidden data remains concealed when classical communication among the users is allowed. 
Moreover, the data remains hidden even under any protocol restricted to \emph{local operations and classical communication}(LOCC).
Recovering the hidden data instead requires global quantum operations enabled by quantum communication, shared entanglement, or direct interactions among the users.

The first quantum data-hiding scheme was proposed for concealing one classical bit between two parties\cite{terh2001}. 
Subsequently, a multiparty quantum data-hiding scheme was proposed, in which the hidden data remains concealed even if quantum communication is allowed among specified subsets of parties, whereas near-perfect recovery is possible for all other subsets\cite{egge2002}. 
Since then, quantum data hiding has been further developed in various directions\cite{lupo2016,lami2018,lami2021,ha20241}. 
More recently, an $(m,m)$-threshold quantum data-hiding scheme was proposed for concealing classical data of arbitrary size among $m$ parties\cite{ha20251}. 
The scheme requires the collaboration of all parties for recovery, representing the strongest possible collaboration requirement. 
It is therefore natural to ask whether quantum data hiding can be achieved under intermediate collaboration requirements.
However, no $(k,m)$-threshold quantum data-hiding scheme has yet been proposed for $2\leqslant k<m$.

Here, we propose a $(2,m)$-threshold quantum data-hiding scheme for concealing one classical bit among $m$ parties. 
In the proposed scheme, any pair of parties can collaborate to perfectly recover the hidden bit through a joint measurement, whereas LOCC measurements performed even by all parties reveal only an arbitrarily small amount of information. 
To establish these results, we derive bounds on the optimal LOCC discrimination of multiparty quantum states. 
Moreover, the proposed scheme can be realized using only separable states of low-dimensional quantum systems, making its practical implementation more attainable.

\section{Multiparty quantum state discrimination}\label{sec:mqsd}
For a multiparty Hilbert space $\mathcal{H}=\bigotimes_{k=1}^{m}\mathbb{C}^{d_{k}}$ with $m\geqslant2$ parties $A_{1},\ldots,A_{m}$ and local dimensions $d_{1},\ldots,d_{m}\geqslant2$, let $\mathbb{H}$ denote the set of all Hermitian operators acting on $\mathcal{H}$.
We further denote by $\mathbb{H}_{+}$ the set of all positive-semidefinite operators in $\mathbb{H}$, that is,
\begin{equation}\label{eq:hpdf}
\mathbb{H}_{+}=\{E\in\mathbb{H}\mid \bra{v}E\ket{v}\geqslant0~\mbox{for all}~\ket{v}\in\mathcal{H}\}.
\end{equation}
Note that $\mathbb{H}_{+}$ is a closed convex cone with self-duality,
\begin{equation}\label{eq:sdch}
\mathbb{H}_{+}^{*}=\mathbb{H}_{+},
\end{equation}
where the superscript $*$ denotes the dual cone of a given cone\cite{boyd2004,dual}.

A multiparty quantum state is described by a density operator $\rho$, that is, $\rho\in\mathbb{H}_{+}$ and $\Tr\rho=1$.
A measurement is represented by a positive operator-valued measure $\{M_{i}\}_{i}$, that is, $\{M_{i}\}_{i}\subseteq\mathbb{H}_{+}$ and $\sum_{i}M_{i}=\mathbbm{1}$ where $\mathbbm{1}$ denotes the identity operator in $\mathbb{H}$.

\begin{definition}\label{def:ppt}
For each $k\in\{1,\ldots,m\}$, we say that $E\in\mathbb{H}$ is \emph{$k$-PPT} if it is
\emph{positive partial transpose}(PPT) with respect to party $A_k$, that is,
\begin{equation}\label{eq:kppt}
E^{\PT_{k}}\in\mathbb{H}_{+}
\end{equation}
where the superscript $\PT_{k}$ denotes the partial transposition of $E$ with respect to the party $A_{k}$ in the standard basis $\{\ket{i}\}_{i=0}^{d_{k}-1}$.
When $m=2$, we simply say that $E\in\mathbb{H}$ is \emph{PPT} if it is $1$-PPT (equivalently, $2$-PPT)\cite{pere1996,pptp}.
\end{definition}

For each $k\in\{1,\ldots,m\}$, we use $\mathbb{PPT}_{k}$ to denote the set of all $k$-PPT operators in $\mathbb{H}$, that is,
\begin{equation}\label{eq:pptk}
\mathbb{PPT}_{k}=\{E\in\mathbb{H}\mid E^{\PT_{k}}\in\mathbb{H}_{+}\}.
\end{equation}
Note that $\mathbb{PPT}_{k}$ is a closed convex cone with self-duality,
\begin{equation}\label{eq:sdpk}
\mathbb{PPT}_{k}^{*}=\mathbb{PPT}_{k},
\end{equation}
because $\mathbb{H}_{+}$ is closed convex and self dual, and $\Tr(EF)=\Tr(E^{\PT_{k}}F^{\PT_{k}})$ for all $E,F\in\mathbb{H}$.

We further define 
\begin{equation}\label{eq:pptp}
\mathbb{PPT}_{+}=\mathbb{H}_{+}\cap\mathbb{PPT}_{1}\cap\cdots\cap\mathbb{PPT}_{m},
\end{equation}
which is the intersection of the closed convex cones $\mathbb{H}_{+},\mathbb{PPT}_{1},\ldots,\mathbb{PPT}_{m}$.
The dual cone of $\mathbb{PPT}_{+}$ can be written as
\begin{eqnarray}\label{eq:rwpt}
\mathbb{PPT}_{+}^{*}
&=&\left\{E_{0}+E_{1}+\cdots+E_{m}\,\middle|\,E_{0}\in\mathbb{H}_{+}^{*},~E_{1}\in\mathbb{PPT}_{1}^{*},\,\ldots,\,E_{m}\in\mathbb{PPT}_{m}^{*}\right\}\nonumber\\
&=&\left\{E_{0}+E_{1}+\cdots+E_{m}\,\middle|\,E_{0}\in\mathbb{H}_{+},~E_{1}\in\mathbb{PPT}_{1},\,\ldots,\,E_{m}\in\mathbb{PPT}_{m}\right\}\nonumber\\
&=&\left\{E_{0}+E_{1}^{\PT_{1}}+\cdots+E_{m}^{\PT_{m}}\,\middle|\,E_{0},E_{1},\ldots,E_{m}\in\mathbb{H}_{+}\right\},
\end{eqnarray}
where the second equality follows from Eqs.~\eqref{eq:sdch} and \eqref{eq:sdpk}, and the last equality follows from Eq.~\eqref{eq:pptk}.
The first equality in Eq.~\eqref{eq:rwpt} follows from
\begin{equation}\label{eq:dcif}
\textstyle
\left(\bigcap_{i}C_{i}\right)^{*}=\left\{\,\sum_{i}e_{i}\,\middle|\,e_{i}\in C_{i}^{*}~\mbox{for all}~i\,\right\}
\end{equation}
for any finite collection of closed convex cones $\{C_{i}\}_{i}$ in a real vector space\cite{sand1954}.

Now, let us consider the situation of discriminating multiparty quantum states $\rho_{0},\ldots,\rho_{n-1}$ from the ensemble
\begin{equation}\label{eq:menb}
\mathcal{E}=\{\eta_{i},\rho_{i}\}_{i=0}^{n-1},
\end{equation}
where the state $\rho_{i}$ is prepared with the probability $\eta_{i}$ for each $i\in\{0,\ldots,n-1\}$.
To guess the prepared state from $\mathcal{E}$, we use a measurement 
\begin{equation}\label{eq:wumt}
\mathcal{M}=\{M_{i}\}_{i=0}^{n-1},
\end{equation}
where the measurement outcome corresponding to $M_{i}$ leads to the decision that the prepared state is $\rho_{i}$, for each $i\in\{0,\ldots,n-1\}$.
The \emph{minimum-error discrimination} of $\mathcal{E}$ is to achieve the maximum average probability of correctly guessing the prepared state from the ensemble $\mathcal{E}$, that is,
\begin{equation}\label{eq:dpge}
p_{\sf G}(\mathcal{E})=\max_{\mathcal{M}}\sum_{i=0}^{n-1}\eta_{i}\Tr(\rho_{i}M_{i}),
\end{equation}
where the maximum is taken over all possible measurements\cite{hels1969}.

A measurement is called an \emph{LOCC measurement} if it can be realized by local operations on each of the parties $A_{1},\ldots,A_{m}$ together with classical communication among them.
When the available measurements are restricted to LOCC measurements, we denote by $p_{\sf L}(\mathcal{E})$ the maximum average probability of correctly guessing the prepared state from $\mathcal{E}$ in Eq.~\eqref{eq:menb}, that is,
\begin{equation}\label{eq:dple}
p_{\sf L}(\mathcal{E})=\max_{\textsf{LOCC}\,\mathcal{M}}\sum_{i=0}^{n-1}\eta_{i}\Tr(\rho_{i}M_{i}).
\end{equation}
We say that a measurement $\{M_{i}\}_{i}$ is a \emph{PPT measurement} if $\{M_{i}\}_{i}\subseteq\mathbb{PPT}_{+}$.
Similarly, we define the optimal PPT discrimination probability of $\mathcal{E}$ by 
\begin{equation}\label{eq:dpte}
p_{\sf PPT}(\mathcal{E})=\max_{\textsf{PPT}\,\mathcal{M}}\sum_{i=0}^{n-1}\eta_{i}\Tr(\rho_{i}M_{i}),
\end{equation}
where the maximum is taken over all possible PPT measurements.
Since simply guessing the most probable state is an LOCC measurement, and every LOCC measurement is a PPT measurement\cite{chit2014,ha20251}, it follows from the definitions of $p_{\sf G}(\mathcal{E})$, $p_{\sf L}(\mathcal{E})$ and $p_{\sf PPT}(\mathcal{E})$ that
\begin{equation}\label{eq:ineq}
\tfrac{1}{n}\leqslant
\max\{\eta_{0},\ldots,\eta_{n-1}\}\leqslant
p_{\sf L}(\mathcal{E})\leqslant
p_{\sf PPT}(\mathcal{E})\leqslant
p_{\sf G}(\mathcal{E}).
\end{equation}

For an ensemble $\mathcal{E}=\{\eta_{i},\rho_{i}\}_{i=0}^{n-1}$, the following theorem provides a dual characterization of
$p_{\sf PPT}(\mathcal{E})$ defined in Eq.~\eqref{eq:dpte}.
The proof of Theorem~\ref{thm:ubpt} is given in Appendix~\ref{app:ubpt}.
\begin{theorem}\label{thm:ubpt}
For a multiparty quantum state ensemble $\mathcal{E}=\{\eta_{i},\rho_{i}\}_{i=0}^{n-1}$ and 
\begin{equation}\label{eq:dhpt}
\mathbb{H}_{\sf PPT}(\mathcal{E}):=\{H\in\mathbb{H}\,|\, 
H-\eta_{i}\rho_{i}\in\mathbb{PPT}_{+}^{*}~\mbox{for all}~i=0,\ldots,n-1\,\},
\end{equation}
we have
\begin{equation}\label{eq:dppt}
p_{\sf PPT}(\mathcal{E})=\min_{H\in\mathbb{H}_{\sf PPT}(\mathcal{E})}\Tr H,
\end{equation}
where the minimum is taken over all possible $H\in\mathbb{H}_{\sf PPT}(\mathcal{E})$.
\end{theorem}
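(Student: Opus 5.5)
The plan is to prove the equality in Eq.~\eqref{eq:dppt} in two parts. First I show weak duality, $p_{\sf PPT}(\mathcal{E})\leqslant\Tr H$ for every $H\in\mathbb{H}_{\sf PPT}(\mathcal{E})$. Then I show that the bound is attained, by treating Eq.~\eqref{eq:dpte} as a conic program over the closed convex cone $\mathbb{PPT}_{+}$ and invoking strong duality through a Slater-type condition.

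For weak duality, take any PPT measurement $\{M_{i}\}_{i}\subseteq\mathbb{PPT}_{+}$ and any $H\in\mathbb{H}_{\sf PPT}(\mathcal{E})$. Since $H-\eta_{i}\rho_{i}\in\mathbb{PPT}_{+}^{*}$ and $M_{i}\in\mathbb{PPT}_{+}$, the definition of the dual cone gives $\Tr[(H-\eta_{i}\rho_{i})M_{i}]\geqslant0$. Summing over $i$ and using $\sum_{i}M_{i}=\mathbbm{1}$ yields
\begin{equation*}
\Tr H=\sum_{i}\Tr(HM_{i})\geqslant\sum_{i}\eta_{i}\Tr(\rho_{i}M_{i}).
\end{equation*}
Maximizing the right-hand side over PPT measurements and minimizing the left-hand side over $H$ gives $p_{\sf PPT}(\mathcal{E})\leqslant\inf_{H\in\mathbb{H}_{\sf PPT}(\mathcal{E})}\Tr H$.

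For the reverse inequality, I write the primal problem as maximizing $\sum_{i}\eta_{i}\Tr(\rho_{i}M_{i})$ subject to $M_{i}\in\mathbb{PPT}_{+}$ and the linear constraint $\sum_{i}M_{i}=\mathbbm{1}$. The Lagrangian has a Hermitian multiplier $H$ for the equality constraint:
\begin{equation*}
L=\Tr H+\sum_{i}\Tr\bigl[(\eta_{i}\rho_{i}-H)M_{i}\bigr].
\end{equation*}
Its supremum over $M_{i}\in\mathbb{PPT}_{+}$ is finite exactly when $H-\eta_{i}\rho_{i}\in\mathbb{PPT}_{+}^{*}$ for every $i$, and in that case it equals $\Tr H$. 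So the Lagrange dual is precisely the minimization in Eq.~\eqref{eq:dppt}.

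Strong duality holds by Slater's condition once the primal has a strictly feasible point, meaning each $M_{i}$ lies in the interior of $\mathbb{PPT}_{+}$. I would take $M_{i}=\mathbbm{1}/n$. The identity is positive definite, and each partial transpose $\mathbbm{1}^{\PT_{k}}=\mathbbm{1}$ is positive definite as well. Hence small Hermitian perturbations of $\mathbbm{1}/n$ stay in $\mathbb{H}_{+}$ and in every $\mathbb{PPT}_{k}$, so $\mathbbm{1}/n$ is interior to $\mathbb{PPT}_{+}$.

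The primal value is finite, since by Eq.~\eqref{eq:ineq} it lies between $1/n$ and $1$. Strong duality therefore gives zero duality gap, and dual attainment follows. As an independent check on attainment, I would also use the fact that the dual problem has a strictly feasible point: $H=c\mathbbm{1}$ with $c$ large gives $H-\eta_{i}\rho_{i}$ positive definite, hence interior to $\mathbb{H}_{+}\subseteq\mathbb{PPT}_{+}^{*}$ by Eq.~\eqref{eq:rwpt}. Together with bounded primal feasibility, this ensures both optima are attained and justifies writing $\min$ and $\max$.

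The main obstacle is verifying the interior (Slater) condition carefully in the conic setting, specifically that $\mathbbm{1}$ is interior to the intersection cone $\mathbb{PPT}_{+}$ and not merely in it. The argument is that each partial transposition is a linear isometry on $\mathbb{H}$ and fixes $\mathbbm{1}$, so a neighbourhood of $\mathbbm{1}$ in $\mathbb{H}_{+}$ maps into $\mathbb{H}_{+}$ under each $\PT_{k}$. The remaining steps are standard conic-duality bookkeeping.
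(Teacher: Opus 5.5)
Your proof is correct, but it reaches strong duality by a different route from the paper. The weak-duality half is identical to the paper's.

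For the reverse inequality, the paper does not cite a duality theorem. It applies the separating hyperplane theorem directly to the convex set of pairs $(\sum_i\eta_i\Tr(\rho_iM_i)-p,\ \mathbbm{1}-\sum_iM_i)$ with $p>p_{\sf PPT}(\mathcal{E})$ and $M_i\in\mathbb{PPT}_{+}$, which gives a nonzero normal $(\gamma,\Gamma)$. Testing with $M_i=\mathbb{O}$, and with $M_j=tM$ as $t\to\infty$, it shows $\Tr\Gamma\leqslant\gamma p_{\sf PPT}(\mathcal{E})$ and $\Gamma-\gamma\eta_i\rho_i\in\mathbb{PPT}_{+}^{*}$. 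It then rules out $\gamma=0$ using $\Tr E>0$ for every nonzero $E\in\mathbb{PPT}_{+}^{*}$. Hence $H=\Gamma/\gamma$ is dual feasible with $\Tr H\leqslant p_{\sf PPT}(\mathcal{E})$, giving both the equality and attainment of the minimum at once.

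That last fact is the dual form of your Slater condition: $\Tr E=\Tr(E\mathbbm{1})>0$ for all nonzero $E\in\mathbb{PPT}_{+}^{*}$ is equivalent to $\mathbbm{1}$ lying in the interior of $\mathbb{PPT}_{+}$. So both proofs rest on the same geometric fact. Yours packages it into a standard conic-duality theorem, which is shorter and makes clear why the result holds. The paper reproves the needed special case from the separation theorem, which is self-contained and yields a dual-feasible $H$ with $\Tr H\leqslant p_{\sf PPT}(\mathcal{E})$ directly.

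One small correction to your ``independent check'': strict feasibility of the dual point $H=c\mathbbm{1}$ (with finite dual value) gives attainment of the \emph{primal} maximum, not the dual minimum. Primal attainment is in any case clear from compactness of the set of PPT measurements. Attainment of the minimum over $\mathbb{H}_{\sf PPT}(\mathcal{E})$ comes from your primal Slater point $M_i=\mathbbm{1}/n$ together with finiteness of $p_{\sf PPT}(\mathcal{E})$, which you already establish, so the main argument is unaffected.
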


For two-state ensembles, the dual characterization in Theorem~\ref{thm:ubpt} can be reformulated as a trace-norm minimization problem.
The proof of Theorem~\ref{thm:tspt} is given in Appendix~\ref{app:tspt}.
\begin{theorem}\label{thm:tspt}
For a two-state ensemble $\mathcal{E}=\{\eta_{0},\rho_{0};\eta_{1},\rho_{1}\}$ and 
\begin{equation}\label{eq:lame}
\Lambda_{\mathcal{E}}:=\eta_{0}\rho_{0}-\eta_{1}\rho_{1},
\end{equation}
we have
\begin{equation}\label{eq:tpte}
p_{\sf PPT}(\mathcal{E})=
\tfrac{1}{2}+\tfrac{1}{2}\min(\Tr|E_{0}|+\Tr|E_{1}|+\cdots+\Tr|E_{m}|)
\end{equation}
over all possible $\{E_{0},E_{1},\ldots,E_{m}\}\subseteq\mathbb{H}$ satisfying
\begin{equation}\label{eq:cdes}
E_{0}+E_{1}^{\PT_{1}}+\cdots+E_{m}^{\PT_{m}}=\Lambda_{\mathcal{E}},
\end{equation}
where $|E|$ denotes the positive square root of $E^{2}$ for $E\in\mathbb{H}$.
\end{theorem}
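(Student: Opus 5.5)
We derive this from Theorem~\ref{thm:ubpt} by rewriting the dual variable $H$ in a symmetric form. For two states, write $H=\tfrac12(\eta_0\rho_0+\eta_1\rho_1)+K$ with $K\in\mathbb{H}$. Then $\Tr H=\tfrac12+\Tr K$, and the two constraints in Eq.~\eqref{eq:dhpt} become
\begin{equation*}
K-\tfrac12\Lambda_{\mathcal{E}}\in\mathbb{PPT}_{+}^{*},\qquad K+\tfrac12\Lambda_{\mathcal{E}}\in\mathbb{PPT}_{+}^{*}.
\end{equation*}
So it suffices to show that
\begin{equation*}
\min\{\Tr K\mid K\pm\tfrac12\Lambda_{\mathcal{E}}\in\mathbb{PPT}_{+}^{*}\}=\tfrac12\min\sum_{k=0}^{m}\Tr|E_k|,
\end{equation*}
where the second minimum is under constraint~\eqref{eq:cdes}. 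I would prove the two inequalities separately. Throughout, I use the last form of Eq.~\eqref{eq:rwpt} and the fact that partial transposition preserves trace.

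\emph{The $\leqslant$ direction.} Take any $\{E_k\}$ satisfying Eq.~\eqref{eq:cdes}, and set
\begin{equation*}
K=\tfrac12\bigl(|E_0|+|E_1|^{\PT_1}+\cdots+|E_m|^{\PT_m}\bigr).
\end{equation*}
Then
\begin{equation*}
K\pm\tfrac12\Lambda_{\mathcal{E}}=\tfrac12\bigl[(|E_0|\pm E_0)+(|E_1|\pm E_1)^{\PT_1}+\cdots+(|E_m|\pm E_m)^{\PT_m}\bigr],
\end{equation*}
and each $|E_k|\pm E_k$ lies in $\mathbb{H}_{+}$. Hence both operators lie in $\mathbb{PPT}_{+}^{*}$. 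Moreover $\Tr K=\tfrac12\sum_k\Tr|E_k|$, so $p_{\sf PPT}(\mathcal{E})\leqslant\tfrac12+\tfrac12\sum_k\Tr|E_k|$.

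\emph{The $\geqslant$ direction.} Let $K$ be optimal, which exists by Theorem~\ref{thm:ubpt}. By Eq.~\eqref{eq:rwpt}, write
\begin{equation*}
K+\tfrac12\Lambda_{\mathcal{E}}=\sum_k P_k^{\PT_k},\qquad K-\tfrac12\Lambda_{\mathcal{E}}=\sum_k Q_k^{\PT_k},
\end{equation*}
with all $P_k,Q_k\in\mathbb{H}_{+}$, where $\PT_0$ denotes the identity map. Subtracting the two decompositions gives $\Lambda_{\mathcal{E}}=\sum_k(P_k-Q_k)^{\PT_k}$, so $E_k:=P_k-Q_k$ satisfies Eq.~\eqref{eq:cdes}. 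Adding them gives $2\Tr K=\sum_k\Tr(P_k+Q_k)$. Since $P_k,Q_k\geqslant0$, the triangle inequality yields $\Tr|P_k-Q_k|\leqslant\Tr P_k+\Tr Q_k$. Therefore $\tfrac12\sum_k\Tr|E_k|\leqslant\Tr K$, which gives the reverse inequality.

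The minimum in Eq.~\eqref{eq:tpte} is attained because the $E_k$ built from the optimal $K$ achieve the value. The only subtle point is recognising the shift $H=\tfrac12(\eta_0\rho_0+\eta_1\rho_1)+K$ that symmetrizes the two constraints. The step I would check most carefully is the $\geqslant$ direction: the decompositions of $K\pm\tfrac12\Lambda_{\mathcal{E}}$ from Eq.~\eqref{eq:rwpt} are not unique, but the argument works for any choice of them, so no further control is needed.
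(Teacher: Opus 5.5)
Your proof is correct and follows essentially the same route as the paper. In the $\leqslant$ direction your $K$ is just $H-\tfrac12(\eta_0\rho_0+\eta_1\rho_1)$ for the paper's feasible $H$, which is built from the positive and negative parts of each $E_k$. In the $\geqslant$ direction you decompose the two constraint operators of an optimal $H$ via Eq.~\eqref{eq:rwpt}, take their difference, and apply the trace-norm triangle inequality, exactly as the paper does.
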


For the case of two parties ($m=2$), the identity $\mathbb{PPT}_{1}=\mathbb{PPT}_{2}$ implies that the cones $\mathbb{PPT}_{+}$ and $\mathbb{PPT}_{+}^{*}$ in Eqs.~\eqref{eq:pptp} and \eqref{eq:rwpt} can be expressed as
\begin{subequations}\label{eq:cmtp}
\begin{eqnarray}
\mathbb{PPT}_{+}&=&\mathbb{H}_{+}\cap\mathbb{PPT}_{1},\label{eq:mtpp}\\
\mathbb{PPT}_{+}^{*}&=&\left\{E_{0}+E_{1}^{\PT_{1}}\,\middle|\,E_{0},E_{1}\in\mathbb{H}_{+}\right\}.\label{eq:mtps}
\end{eqnarray}
\end{subequations}
Using Eq.~\eqref{eq:cmtp} and arguing as in the proof of Theorem~\ref{thm:tspt}, we obtain the following corollary.
\begin{corollary}\label{cor:tmtp}
For a two-party two-state ensemble $\mathcal{E}=\{\eta_{0},\rho_{0};\eta_{1},\rho_{1}\}$, we have
\begin{equation}\label{eq:tmtp}
p_{\sf PPT}(\mathcal{E})=
\tfrac{1}{2}+\tfrac{1}{2}\min(\Tr|E_{0}|+\Tr|E_{1}|),
\end{equation}
where the minimum is taken over all possible $\{E_{0},E_{1}\}\subseteq\mathbb{H}$ satisfying
\begin{equation}\label{eq:celm}
E_{0}+E_{1}^{\PT_{1}}=\Lambda_{\mathcal{E}}.
\end{equation}
\end{corollary}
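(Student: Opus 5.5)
We specialize the argument for Theorem~\ref{thm:tspt} to $m=2$, using the reduced cone descriptions in Eq.~\eqref{eq:cmtp}. The plan is to invoke the dual characterization of Theorem~\ref{thm:ubpt}: $p_{\sf PPT}(\mathcal{E})=\min\Tr H$ over $H$ with $H-\eta_0\rho_0\in\mathbb{PPT}_+^*$ and $H-\eta_1\rho_1\in\mathbb{PPT}_+^*$. By Eq.~\eqref{eq:mtps}, these two conditions read
\begin{equation*}
H-\eta_0\rho_0=P_0+Q_0^{\PT_1},\qquad H-\eta_1\rho_1=P_1+Q_1^{\PT_1},
\end{equation*}
with $P_0,P_1,Q_0,Q_1\in\mathbb{H}_+$. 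The proof then consists of two inequalities between this minimum and $\tfrac12+\tfrac12\min(\Tr|E_0|+\Tr|E_1|)$ subject to Eq.~\eqref{eq:celm}.

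\emph{Step 1 (upper bound on $p_{\sf PPT}$).} Take any $E_0,E_1\in\mathbb{H}$ with $E_0+E_1^{\PT_1}=\Lambda_{\mathcal{E}}$. Decompose each into positive and negative parts, $E_j=E_j^+-E_j^-$, with $E_j^\pm\in\mathbb{H}_+$ and $\Tr|E_j|=\Tr E_j^++\Tr E_j^-$. Define
\begin{equation*}
H=\tfrac12\bigl(\eta_0\rho_0+\eta_1\rho_1+|E_0|+|E_1|^{\PT_1}\bigr).
\end{equation*}
Then
\begin{equation*}
H-\eta_0\rho_0=\tfrac12\bigl(|E_0|-E_0\bigr)+\tfrac12\bigl(|E_1|-E_1\bigr)^{\PT_1}=E_0^-+(E_1^-)^{\PT_1},
\end{equation*}
which lies in $\mathbb{PPT}_+^*$. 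Similarly, $H-\eta_1\rho_1=E_0^++(E_1^+)^{\PT_1}\in\mathbb{PPT}_+^*$. Since partial transposition preserves trace and $\Tr(\eta_0\rho_0+\eta_1\rho_1)=1$, we get $\Tr H=\tfrac12+\tfrac12(\Tr|E_0|+\Tr|E_1|)$. Hence $p_{\sf PPT}(\mathcal{E})$ is bounded above by the right-hand side of Eq.~\eqref{eq:tmtp}.

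\emph{Step 2 (lower bound).} Let $H$ be a minimizer from Theorem~\ref{thm:ubpt}, with the decompositions above. Subtracting the two relations gives
\begin{equation*}
\Lambda_{\mathcal{E}}=\eta_0\rho_0-\eta_1\rho_1=(P_1-P_0)+(Q_1-Q_0)^{\PT_1}.
\end{equation*}
So $E_0:=P_1-P_0$ and $E_1:=Q_1-Q_0$ are feasible for Eq.~\eqref{eq:celm}. By the triangle inequality, $\Tr|E_0|\le\Tr P_0+\Tr P_1$ and $\Tr|E_1|\le\Tr Q_0+Q_1$ in trace. Adding the two relations and taking traces gives
\begin{equation*}
2\Tr H-1=\Tr(P_0+P_1+Q_0+Q_1)\ge\Tr|E_0|+\Tr|E_1|.
\end{equation*}
Therefore $p_{\sf PPT}(\mathcal{E})=\Tr H\ge\tfrac12+\tfrac12\min(\Tr|E_0|+\Tr|E_1|)$.

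\emph{Attainment of the minimum.} This is the only delicate point. The feasible set in Eq.~\eqref{eq:celm} is a closed affine subspace, and the objective $\Tr|E_0|+\Tr|E_1|$ is a continuous norm, hence coercive. A minimizer therefore exists, so writing $\min$ is justified. Alternatively, Step 2 directly exhibits a feasible pair achieving the value, which together with Step 1 shows the minimum is attained. Everything else is bookkeeping identical to the $m$-party case, which here has only the two terms $E_0$ and $E_1^{\PT_1}$.
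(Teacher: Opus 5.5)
Your proposal is correct and is essentially the paper's argument. The paper obtains Corollary~\ref{cor:tmtp} by rerunning the proof of Theorem~\ref{thm:tspt} with the two-party cone description in Eq.~\eqref{eq:mtps}: a Jordan decomposition builds a feasible $H$ for the upper bound, and decomposing an optimal $H$ from Theorem~\ref{thm:ubpt} plus the trace-norm triangle inequality gives the lower bound, exactly as you do. Your explicit remark that the Step~2 pair attains the minimum is a point the paper leaves implicit, and the stray ``$\Tr Q_0+Q_1$'' should read $\Tr Q_0+\Tr Q_1$.
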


\section{$\bm{(2,m)}$-threshold scheme for quantum data hiding}\label{sec:mqdh}
In this section, we present a $(2,m)$-threshold data-hiding scheme for one classical bit shared among $m$ parties $A_{1},\ldots,A_{m}$. 
The hidden bit can be perfectly recovered by any pair of parties among $A_{1},\ldots,A_{m}$ through an appropriate joint measurement, while any LOCC measurement performed even by all $m$ parties reveals only an arbitrarily small amount of information.

To construct the proposed $(2,m)$-threshold data-hiding scheme, we consider a multiparty quantum system consisting of $m$ parties $A_{1},\ldots,A_{m}$, where each pair of parties shares a two-party subsystem; for each $k\in\{1,\ldots,m\}$, let the party $A_k$ consist of the $m-1$ subsystems
\begin{equation}
A_{k}^{(1)},\ldots,A_{k}^{(k-1)},A_{k}^{(k+1)},\ldots,A_{k}^{(m)}.
\end{equation}
For every pair $(k,k')$ with $1\leqslant k<k'\leqslant m$, the parties $A_k$ and $A_{k'}$ share the two-party subsystem $A_{k}^{(k')}A_{k'}^{(k)}$.
Figure~\ref{fig:exmt} illustrates the subsystems of each party when $m=3$.

\begin{figure}[!tt]
\centerline{\includegraphics[scale=1.15]{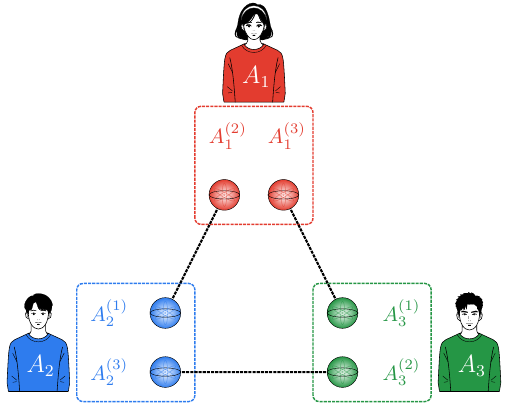}}
\caption{Subsystems of each party when $m=3$. 
The party $A_{1}$ consists of the subsystems $A_{1}^{(2)}$ and $A_{1}^{(3)}$, the party $A_{2}$ consists of the subsystems $A_{2}^{(1)}$ and $A_{2}^{(3)}$, and the party $A_{3}$ consists of the subsystems $A_{3}^{(1)}$ and $A_{3}^{(2)}$.
Each pair of parties shares a two-party subsystem: $A_{1}^{(2)}A_{2}^{(1)}$ between $A_{1}$ and $A_{2}$, $A_{1}^{(3)}A_{3}^{(1)}$ between $A_{1}$ and $A_{3}$, and $A_{2}^{(3)}A_{3}^{(2)}$ between $A_{2}$ and $A_{3}$.
}\label{fig:exmt}
\end{figure}

The following theorem shows that the optimal PPT discrimination probability of a multiparty quantum state ensemble can be bounded in terms of those of the underlying two-party state ensembles.
The proof of Theorem~\ref{thm:umpt} is given in Appendix~\ref{app:umpt}.
\begin{theorem}\label{thm:umpt}
When $\sigma_{0}^{(k,k')}$ and $\sigma_{1}^{(k,k')}$ are PPT states on the two-party system $A_{k}^{(k')}A_{k'}^{(k)}$ for all $k,k'\in\{1,\ldots,m\}$ with $k<k'$, the multiparty quantum state ensemble $\mathcal{E}=\{\eta_{0},\rho_{0};\eta_{1},\rho_{1}\}$ with
\begin{equation}\label{eq:ttps}
\eta_{0}=\eta_{1}=\frac{1}{2},~
\rho_{0}=\bigotimes_{\substack{k,k'=1\\ k<k'}}^{m}\sigma_{0}^{(k,k')},~
\rho_{1}=\bigotimes_{\substack{k,k'=1\\ k<k'}}^{m}\sigma_{1}^{(k,k')}
\end{equation}
satisfies
\begin{equation}\label{eq:sppt}
p_{\sf PPT}(\mathcal{E})\leqslant
\frac{1}{2}+\sum_{\substack{k,k'=1\\ k<k'}}^{m}\left(p_{\sf PPT}(\mathcal{E}^{(k,k')})-\frac{1}{2}\right)
\end{equation}
where each $\mathcal{E}^{(k,k')}$ is the two-party state ensemble defined by $\sigma_{0}^{(k,k')}$ and $\sigma_{1}^{(k,k')}$ with equal probabilities, that is,
\begin{equation}\label{eq:senb}
\mathcal{E}^{(k,k')}=\{\tfrac{1}{2},\sigma_{0}^{(k,k')};\tfrac{1}{2},\sigma_{1}^{(k,k')}\},
\end{equation}
for $k,k'\in\{1,\ldots,m\}$ with $k<k'$.
\end{theorem}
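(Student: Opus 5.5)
The plan is to combine Corollary~\ref{cor:tmtp}, applied to each two-party ensemble $\mathcal{E}^{(k,k')}$, with the multiparty trace-norm characterization in Theorem~\ref{thm:tspt}. The idea is to build a feasible decomposition of $\Lambda_{\mathcal{E}}$ from the optimal two-party decompositions and then bound its cost using the triangle inequality.

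\emph{Two-party decompositions.} For each pair $k<k'$, Corollary~\ref{cor:tmtp} gives Hermitian $F_{0}^{(k,k')},F_{1}^{(k,k')}$ on $A_{k}^{(k')}A_{k'}^{(k)}$ with
\[
F_{0}^{(k,k')}+F_{1}^{(k,k')\,\PT}=\tfrac{1}{2}\bigl(\sigma_{0}^{(k,k')}-\sigma_{1}^{(k,k')}\bigr)
\]
and
\[
\Tr|F_{0}^{(k,k')}|+\Tr|F_{1}^{(k,k')}|=2p_{\sf PPT}(\mathcal{E}^{(k,k')})-1 .
\]
Here the partial transpose is taken on the subsystem $A_{k}^{(k')}$, which belongs to party $A_{k}$.

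\emph{Telescoping.} Order the pairs as $j=1,\ldots,N$ with $N=m(m-1)/2$. Then
\[
\tfrac12(\rho_{0}-\rho_{1})=\sum_{j}X_{j}\otimes\tfrac12(\sigma_{0}^{j}-\sigma_{1}^{j})\otimes Y_{j},
\]
where $X_{j}$ is the tensor product of $\sigma_{1}$'s on the pairs preceding $j$, and $Y_{j}$ is the tensor product of $\sigma_{0}$'s on the pairs following $j$. Substituting the decomposition from the first step splits each term into two pieces. The first piece is $X_{j}\otimes F_{0}^{j}\otimes Y_{j}$. For pair $j=(k,k')$, the second piece is
\[
X_{j}\otimes F_{1}^{j\,\PT}\otimes Y_{j}=\bigl(X_{j}^{\PT_{k}}\otimes F_{1}^{j}\otimes Y_{j}^{\PT_{k}}\bigr)^{\PT_{k}} .
\]
Here $\PT_{k}$ transposes all subsystems of $A_{k}$. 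On the factor $j$ this is exactly the transpose on $A_{k}^{(k')}$; on the other factors it acts as a one-sided partial transpose of each two-party PPT state, or does nothing. Hence $X_{j}^{\PT_{k}}$ and $Y_{j}^{\PT_{k}}$ are positive operators of unit trace. The key point I would verify carefully is this: each $\sigma_{b}^{(l,l')}$ is PPT, so its transpose on either side is positive, and these are tensor products over disjoint subsystems. I expect this bookkeeping of which subsystems $\PT_{k}$ touches to be the main technical step, though it is elementary.

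\emph{Feasible decomposition and conclusion.} Set
\[
E_{0}=\sum_{j}X_{j}\otimes F_{0}^{j}\otimes Y_{j},
\qquad
E_{k}=\sum_{j=(k,k'),\,k'>k}X_{j}^{\PT_{k}}\otimes F_{1}^{j}\otimes Y_{j}^{\PT_{k}},
\]
and $E_{m}=0$. This satisfies Eq.~\eqref{eq:cdes}. For unit-trace positive $P,Q$ one has $\Tr|P\otimes F\otimes Q|=\Tr|F|$. Combining this with the triangle inequality gives
\[
\sum_{i}\Tr|E_{i}|\leqslant\sum_{j}\bigl(2p_{\sf PPT}(\mathcal{E}^{j})-1\bigr).
\]
Theorem~\ref{thm:tspt} then yields $p_{\sf PPT}(\mathcal{E})\leqslant\frac12+\frac12\sum_{j}(2p_{\sf PPT}(\mathcal{E}^{j})-1)$, which is Eq.~\eqref{eq:sppt}.
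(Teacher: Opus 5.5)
Your proposal is correct and follows essentially the same route as the paper: it takes the optimal two-party decompositions from Corollary~\ref{cor:tmtp}, applies the telescoping identity, and rewrites each $F_{1}$-term as a $\PT_{k}$-conjugate using the PPT property of the other factors. It then bounds the cost with the triangle inequality and multiplicativity of the trace norm, and concludes with Theorem~\ref{thm:tspt}. The only difference is that you telescope with $\sigma_{1}$'s on the preceding pairs and $\sigma_{0}$'s on the following pairs, whereas the paper uses the opposite order; this is immaterial.
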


For an arbitrarily small $\epsilon>0$, there exist two-party orthogonal PPT states $\tau_{0}$ and $\tau_{1}$ such that
\begin{equation}\label{eq:teos}
p_{\sf PPT}(\mathcal{T})\leqslant\tfrac{1}{2}+\epsilon
\end{equation}
where $\mathcal{T}$ is the equiprobable state ensemble consisting of $\tau_{0}$ and $\tau_{1}$\cite{ha20252,mele2025,ha20253}, that is,
\begin{equation}\label{eq:enbt}
\mathcal{T}=\{\tfrac{1}{2},\tau_{0};\tfrac{1}{2},\tau_{1}\}.
\end{equation}
A construction of such an ensemble $\mathcal{T}$ is presented in Appendix~\ref{app:csse}.
Using the ensemble $\mathcal{T}$ in Eq.~\eqref{eq:enbt} as a building block, we now describe a $(2,m)$-threshold data-hiding scheme for one classical bit.

For every pair $(k,k')$ with $1\leqslant k<k'\leqslant m$, the hider independently prepares a two-party quantum state $\tau_{b_{k,k'}}$ from the ensemble $\mathcal{T}$, where $b_{k,k'}\in\{0,1\}$ is chosen uniformly at random, and distributes it to the parties $A_{k}$ and $A_{k'}$.
This results in the multiparty quantum state 
\begin{equation}\label{eq:dhps}
\tau_{\vec{b}}=\bigotimes_{\substack{k,k'=1\\ k<k'}}^{m}[\tau_{b_{k,k'}}]_{A_{k}^{(k')}A_{k'}^{(k)}}
\end{equation}
where
\begin{equation}\label{eq:vecb}
\vec{b}=(b_{k,k'})_{1\leqslant k<k'\leqslant m}\in\{0,1\}^{\binom{m}{2}},
\end{equation}
which is prepared with probability $1/2^{\binom{m}{2}}$.
To conceal a bit $x\in\{0,1\}$, the hider broadcasts to the parties $A_{1},A_{2},\ldots,A_{m}$ the bit string 
\begin{equation}\label{eq:bsvc}
\vec{c}=(c_{k,k'})_{1\leqslant k<k'\leqslant m}\in\{0,1\}^{\binom{m}{2}}
\end{equation}
defined by
\begin{equation}\label{eq:vecc}
c_{k,k'}=
\left\{
\begin{array}{lcc}
b_{\alpha,\alpha'}\oplus x&,&(k,k')=(\alpha,\alpha'),\\
b_{\alpha,\alpha'}\oplus b_{k,k'}&,&(k,k')\neq(\alpha,\alpha'),
\end{array}
\right.
\end{equation}
where $\oplus$ is the modulo-$2$ addition and $(\alpha,\alpha')$ is a fixed but arbitrary pair satisfying $1\leqslant\alpha<\alpha'\leqslant m$.
Figure~\ref{fig:sch} illustrates the proposed $(2,m)$-threshold data-hiding scheme for the case of $m=3$ with $(\alpha,\alpha')=(1,2)$.

\begin{figure}[!tt]
\centerline{\includegraphics[scale=1.15]{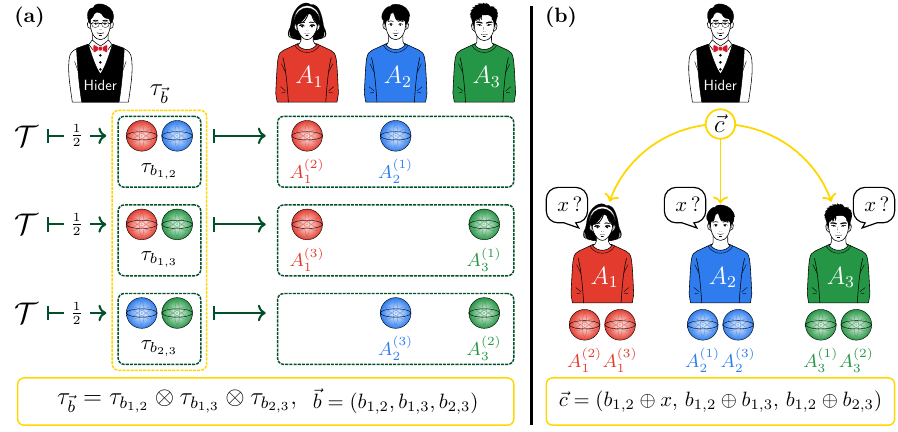}}
\caption{$(2,m)$-threshold data-hiding scheme for the case $m=3$ with $(\alpha,\alpha')=(1,2)$. 
(a) The hider independently prepares the two-party quantum states $\tau_{b_{1,2}}$, $\tau_{b_{1,3}}$, and $\tau_{b_{2,3}}$ from the ensemble $\mathcal{T}$, where each bit $b_{1,2}, b_{1,3}, b_{2,3}\in\{0,1\}$ is chosen uniformly at random, and distributes the states to the corresponding pairs of parties $(A_{1},A_{2})$, $(A_{1},A_{3})$, and $(A_{2},A_{3})$, respectively.
(b) To hide a bit $x\in\{0,1\}$, the hider broadcasts the bit string $\vec{c}=(c_{1,2},c_{1,3},c_{2,3})$ to the parties $A_{1}$, $A_{2}$, and $A_{3}$, where $c_{1,2}=b_{1,2}\oplus x$, $c_{1,3}=b_{1,2}\oplus b_{1,3}$ and $c_{2,3}=b_{1,2}\oplus b_{2,3}$.
}\label{fig:sch}
\end{figure}

As the bit $c_{\alpha,\alpha'}$ is publicly revealed, determining the hidden bit $x$ is equivalent to determining the bit $b_{\alpha,\alpha'}$ for the prepared state $\tau_{b_{\alpha,\alpha'}}$ on the system $A_{\alpha}^{(\alpha')}A_{\alpha'}^{(\alpha)}$.
As all bits in $\vec{c}$ except $c_{\alpha,\alpha'}$ are publicly revealed, determining $b_{\alpha,\alpha'}$ is equivalent to discriminate the states from the multiparty quantum state ensemble $\mathcal{E}=\{\eta_{0},\rho_{0};\,\eta_{1},\rho_{1}\}$ satisfying Eq.~\eqref{eq:ttps} and
\begin{align}\label{eq:epse}
\eta_{0}=\frac{1}{2},~&\sigma_{0}^{(k,k')}=\left\{
\begin{array}{ccc}
\tau_{0}&,&(k,k')=(\alpha,\alpha'),\\
\tau_{c_{k,k'}\oplus 0}&,&(k,k')\neq(\alpha,\alpha'),
\end{array}
\right.\nonumber\\
\eta_{1}=\frac{1}{2},~&\sigma_{1}^{(k,k')}=\left\{
\begin{array}{ccc}
\tau_{1}&,&(k,k')=(\alpha,\alpha'),\\
\tau_{c_{k,k'}\oplus 1}&,&(k,k')\neq(\alpha,\alpha'),
\end{array}
\right.
\end{align}
for any pair $(k,k')$ with $1\leqslant k<k'\leqslant m$.
Therefore, the maximum average probability of correctly guessing $x$ using only LOCC measurements becomes $p_{\sf L}(\mathcal{E})$.

From Theorem~\ref{thm:umpt} and Inequality~\eqref{eq:teos}, together with the construction of $\mathcal{E}$ in Eqs.~\eqref{eq:ttps} and \eqref{eq:epse}, we have
\begin{equation}\label{eq:inpl}
\frac{1}{2}\leqslant
p_{\sf L}(\mathcal{E})\leqslant
p_{\sf PPT}(\mathcal{E})\leqslant
\frac{1}{2}+\binom{m}{2}\left(p_{\sf PPT}(\mathcal{T})-\frac{1}{2}\right).
\end{equation}
Since $\epsilon$ can be chosen arbitrarily small, it follows from Inequality~\eqref{eq:inpl} that
$p_{\sf L}(\mathcal{E})$ can be made arbitrarily close to $\frac{1}{2}$.
In other words, $\epsilon$ can be chosen so small that any LOCC measurement reveals only an arbitrarily small amount of information about the hidden bit $x$.

Furthermore, it follows from the construction of $\mathcal{E}$ in Eqs.~\eqref{eq:ttps} and \eqref{eq:epse}, together with the orthogonality of $\tau_{0}$ and $\tau_{1}$, that the states $\rho_{0}$ and $\rho_{1}$ can be perfectly discriminated by performing an appropriate measurement on any two-party subsystem $A_{k}^{(k')}A_{k'}^{(k)}$ with $1\leqslant k<k'\leqslant m$.
In other words, any two parties can perfectly recover the hidden bit $x$ by jointly measuring their shared subsystem.
Thus, the ensemble $\mathcal{T}$ in Eq.~\eqref{eq:enbt} can be used to construct a $(2,m)$-threshold data-hiding scheme that conceals one classical bit.

\section{Discussion}\label{sec:disc}
We have considered the discrimination of multiparty quantum states and established a multiparty quantum data-hiding scheme for one classical bit to be shared among multiple parties. 
In the proposed scheme, the hidden bit can be perfectly recovered by any pair of parties through a joint measurement, whereas LOCC measurements performed even by all parties reveal only an arbitrarily small amount of information. 
Moreover, we have established upper bounds on the optimal success probability of LOCC discrimination of multiparty quantum states.

As shown in Appendix~\ref{app:csse}, the proposed data-hiding scheme can be implemented using only separable states of low-dimensional quantum systems.
Although the explicit construction considered here is based on a two-qutrit state ensemble (Example~\ref{ex:osse}), analogous constructions are possible using a two-qu$d$it state ensemble for arbitrary local dimension $d\geqslant2$\cite{ha20252,mele2025,ha20253}.
The key ingredient is a two-party orthogonal separable state ensemble constructed by quantum sequences whose optimal PPT discrimination converges exponentially to random guessing.
This not only demonstrates that entanglement is not a necessary resource for realizing multiparty quantum data-hiding schemes, but also improves the practical feasibility of their implementation.

As the present scheme hides a single classical bit, it is natural to ask whether the construction can be extended to encode multiple bits while preserving the same threshold property. 
It would also be interesting to investigate $(k,m)$-threshold quantum data-hiding schemes in which the threshold $k$ is larger than $2$. 
Such schemes would bridge the gap between the existing $(m,m)$-threshold schemes\cite{ha20251} and the $(2,m)$-threshold scheme developed in this work.

\section*{Acknowledgments}
This work was supported by Korea Research Institute for defense Technology planning and advancement (KRIT) grant funded by Defense Acquisition Program Administration(DAPA)(KRIT-CT-23–031) and the Institute for Information \& Communications Technology Planning \& Evaluation(IITP) grant funded by the Korean government(MSIP)(Grant No. RS-2025-02304540). JSK was supported by Creation of the Quantum Information Science R\&D Ecosystem(Grant No. 2022M3H3A106307411) through the National Research Foundation of Korea(NRF) funded by the Korean government(Ministry of Science and ICT).

\appendix
\appsection{Proof of Theorem~\ref{thm:ubpt}}\label{app:ubpt}
In this Appendix, we prove Theorem~\ref{thm:ubpt} by showing that
\begin{subequations}\label{eq:plgq}
\begin{eqnarray}
p_{\sf PPT}(\mathcal{E})&\leqslant&\min_{H\in\mathbb{H}_{\sf PPT}(\mathcal{E})}\Tr H,\label{eq:pleq}\\
p_{\sf PPT}(\mathcal{E})&\geqslant&\min_{H\in\mathbb{H}_{\sf PPT}(\mathcal{E})}\Tr H.\label{eq:pgeq}
\end{eqnarray}
\end{subequations}

\begin{proof}[Proof of Inequality~\eqref{eq:pleq}]
Let us assume that a PPT measurement $\mathcal{M}=\{M_{i}\}_{i=0}^{n-1}$ provides $p_{\sf PPT}(\mathcal{E})$ in Eq.~\eqref{eq:dpte}.
For any $H\in\mathbb{H}_{\sf PPT}(\mathcal{E})$, we have
\begin{equation}\label{eq:pupb}
p_{\sf PPT}(\mathcal{E})=\sum_{i=0}^{n-1}\eta_{i}\Tr(\rho_{i}M_{i})
\leqslant\sum_{i=0}^{n-1}\eta_{i}\Tr(\rho_{i}M_{i})+\sum_{i=0}^{n-1}\Tr[M_{i}(H-\eta_{i}\rho_{i})]
=\Tr H,
\end{equation}
where the first equality is from the assumption of $\mathcal{M}$, the inequality is due to $M_{i}\in\mathbb{PPT}_{+}$ and $H-\eta_{i}\rho_{i}\in\mathbb{PPT}_{+}^{*}$ for all $i\in\{0,\ldots,n-1\}$, and the last equality is by $\sum_{i=0}^{n-1}M_{i}=\mathbbm{1}$.
Thus, Inequality~\eqref{eq:pupb} leads us to Inequality~\eqref{eq:pleq}.
\qedhere
\end{proof}

\begin{proof}[Proof of Inequality~\eqref{eq:pgeq}]
Let us consider the set
\begin{equation}\label{eq:cset}
\mathcal{S}(\mathcal{E})=\Bigg\{\Bigg(\sum_{i=0}^{n-1}\eta_{i}\Tr(\rho_{i}M_{i})-p,\mathbbm{1}-\sum_{i=0}^{n-1}M_{i}\Bigg)\in\mathbb{R}\times\mathbb{H}\,\Bigg|\,p\in\mathbb{R}~\mbox{with}~p>p_{\sf PPT}(\mathcal{E}),~M_{0},\ldots,M_{n-1}\in\mathbb{PPT}_{+}\Bigg\}
\end{equation}
where $\mathbb{R}$ is the set of all real numbers.
By the convexity of $\mathbb{PPT}_{+}$ in Eq.~\eqref{eq:pptp}, the set $\mathcal{S}(\mathcal{E})$ is convex. 
Moreover, $\mathcal{S}(\mathcal{E})$ does not contain the origin $(0,\mathbb{O})$ of $\mathbb{R}\times\mathbb{H}$; otherwise, there exists a PPT measurement $\{M_{i}\}_{i=0}^{n-1}$ satisfying
\begin{equation}\label{eq:rctd}
\sum_{i=0}^{n-1}\eta_{i}\Tr(\rho_{i}M_{i})>p_{\sf PPT}(\mathcal{E}),
\end{equation}
which contradicts the optimality of $p_{\sf PPT}(\mathcal{E})$ in Eq.~\eqref{eq:dpte}.
Here, $\mathbb{O}$ denotes the zero operator in $\mathbb{H}$. 
We also note that the Cartesian product $\mathbb{R}\times\mathbb{H}$ is a real inner-product space equipped with the inner product
\begin{equation}\label{eq:inpd}
\langle (a,A),(b,B)\rangle=ab+\Tr(AB)
\end{equation}
for all $(a,A),(b,B)\in\mathbb{R}\times\mathbb{H}$.

Since $\mathcal{S}(\mathcal{E})$ and the single-element set $\{(0,\mathbb{O})\}$ are disjoint convex sets, it follows from the separating hyperplane theorem\cite{boyd2004,sht} that there exists 
\begin{equation}\label{eq:ggeq}
(\gamma,\Gamma)\in(\mathbb{R}\times\mathbb{H})\setminus\{(0,\mathbb{O})\}
\end{equation}
satisfying
\begin{equation}\label{eq:shtc}
\langle(a,A),(\gamma,\Gamma)\rangle\leqslant0
\end{equation}
for all $(a,A)\in\mathcal{S}(\mathcal{E})$.

Assume that
\begin{subequations}\label{eq:scot}
\begin{gather}
\Tr\Gamma\leqslant\gamma p_{\sf PPT}(\mathcal{E}),\label{eq:fsco}\\
\{\Gamma-\gamma\eta_{i}\rho_{i}\}_{i=0}^{n-1}\subseteq\mathbb{PPT}_{+}^{*},\label{eq:ssco}\\ 
\gamma>0.\label{eq:tsco}
\end{gather}
\end{subequations}
This assumption implies Inequality~\eqref{eq:pgeq} because
\begin{equation}\label{eq:thtg}
\min_{H\in\mathbb{H}_{\sf PPT}(\mathcal{E})}\Tr H\leqslant \Tr(\Gamma/\gamma)\leqslant p_{\sf PPT}(\mathcal{E}),
\end{equation}
where the first inequality is from Inclusion~\eqref{eq:ssco} and Inequality~\eqref{eq:tsco} together with the definition of $\mathbb{H}_{\sf PPT}(\mathcal{E})$ in Eq.~\eqref{eq:dhpt}, and the second inequality is due to Inequalities~\eqref{eq:fsco} and \eqref{eq:tsco}.
It remains to establish Condition~\eqref{eq:scot}.
\qedhere
\end{proof}

\begin{proof}[Proof of \eqref{eq:fsco}]
From the inner product in Eq.~\eqref{eq:inpd}, Inequality~\eqref{eq:shtc} can be rewritten as
\begin{equation}\label{eq:tgmr}
\Tr\Gamma-\sum_{i=0}^{n-1}\Tr[M_{i}(\Gamma-\gamma\eta_{i}\rho_{i})]\leqslant\gamma p
\end{equation}
for all $p>p_{\sf PPT}(\mathcal{E})$ and all $\{M_{i}\}_{i=0}^{n-1}\subseteq\mathbb{PPT}_{+}$.
If $M_{i}=\mathbb{O}$ for all $i\in\{0,\ldots,n-1\}$, then Inequality~\eqref{eq:tgmr} reduces to
\begin{equation}\label{eq:tred}
\Tr\Gamma\leqslant\gamma p
\end{equation}
for all $p>p_{\sf PPT}(\mathcal{E})$.
Taking the limit of $p$ to $p_{\sf PPT}(\mathcal{E})$ gives Inequality~\eqref{eq:fsco}.
\qedhere
\end{proof}

\begin{proof}[Proof of \eqref{eq:ssco}]
For each $j\in\{0,\ldots,n-1\}$, let $M_{j}\in\mathbb{PPT}_{+}$ be arbitrary and set $M_{i}=\mathbb{O}$ for all $i\in\{0,\ldots,n-1\}\setminus\{j\}$. 
Applying Inequality~\eqref{eq:tgmr} and taking the limit of $p$ to $p_{\sf PPT}(\mathcal{E})$, we have
\begin{equation}\label{eq:mjpr}
\Tr\Gamma-\Tr[M_{j}(\Gamma-\gamma\eta_{j}\rho_{j})]\leqslant\gamma p_{\sf PPT}(\mathcal{E}).
\end{equation}

Assume that $\Gamma-\gamma\eta_{j}\rho_{j}\notin\mathbb{PPT}_{+}^{*}$. 
By the definition of the dual cone $\mathbb{PPT}_{+}^{*}$\cite{dual}, there exists $M\in\mathbb{PPT}_{+}$ such that
\begin{equation}\label{eq:dcpt}
\Tr[M(\Gamma-\gamma\eta_{j}\rho_{j})]<0.
\end{equation}
Since $\mathbb{PPT}_{+}$ is a cone, $tM\in\mathbb{PPT}_{+}$ for every $t>0$.
Therefore, $\{M_{i}\}_{i=0}^{n-1}$ defined by $M_{j}=tM$ for $t>0$ and $M_{i}=\mathbb{O}$ for all $i\in\{0,\ldots,n-1\}\setminus\{j\}$ satisfies $\{M_{i}\}_{i=0}^{n-1}\subseteq\mathbb{PPT}_{+}$.

Now, Inequality~\eqref{eq:mjpr} can be rewritten as
\begin{equation}\label{eq:aptm}
\Tr\Gamma -t\Tr[M(\Gamma-\gamma\eta_{j}\rho_{j})]\leqslant\gamma p_{\sf PPT}(\mathcal{E}).
\end{equation}
By Inequality~\eqref{eq:dcpt}, the left-hand side of Inequality~\eqref{eq:aptm} tends to $\infty$ as $t\to\infty$, whereas the right-hand side remains finite. 
This contradiction proves that $\Gamma-\gamma\eta_{j}\rho_{j}\in\mathbb{PPT}_{+}^{*}$.
Since the choice of $j\in\{0,\ldots,n-1\}$ can be arbitrary, Inclusion~\eqref{eq:ssco} is true.
\qedhere
\end{proof}

\begin{proof}[Proof of \eqref{eq:tsco}]
To show $\gamma\geqslant0$, suppose that $\gamma<0$. 
Since Inequality~\eqref{eq:tred} holds for all $p>p_{\sf PPT}(\mathcal{E})$, letting $p\to\infty$ causes its right-hand side to diverge to $-\infty$, whereas its left-hand side remains finite.
This contradiction implies that $\gamma\geqslant0$.

Now, assume that $\gamma=0$. 
In this case, Inclusion~\eqref{eq:ssco} and Inequality~\eqref{eq:tred} reduce to
\begin{equation}\label{eq:geqz}
\Gamma\in\mathbb{PPT}_{+}^{*},~\Tr\Gamma\leqslant0.
\end{equation}
Since Eq.~\eqref{eq:rwpt} implies $\Tr E>0$ for all $E\in\mathbb{PPT}_{+}^{*}$ with $E\neq\mathbb{O}$, it follows from Condition~\eqref{eq:geqz} that
\begin{equation}\label{eq:cgez}
\Gamma=\mathbb{O}.
\end{equation}
This contradicts Inclusion~\eqref{eq:ggeq}. 
Thus, Inequality~\eqref{eq:tsco} holds.
\qedhere
\end{proof}

\appsection{Proof of Theorem~\ref{thm:tspt}}\label{app:tspt}
In this Appendix, we prove Theorem~\ref{thm:tspt} by establishing 
\begin{subequations}\label{eq:plgw}
\begin{eqnarray}
p_{\sf PPT}(\mathcal{E})&\leqslant&\frac{1}{2}+\frac{1}{2}\min\sum_{k=0}^{m}\Tr|E_{k}|,\label{eq:plqw}\\
p_{\sf PPT}(\mathcal{E})&\geqslant&\frac{1}{2}+\frac{1}{2}\min\sum_{k=0}^{m}\Tr|E_{k}|,\label{eq:pgqw}
\end{eqnarray}
\end{subequations}
where the minimum is taken over all possible $\{E_{k}\}_{k=0}^{m}\subseteq\mathbb{H}$ satisfying Eq.~\eqref{eq:cdes}.

\begin{proof}[Proof of Inequality~\eqref{eq:plqw}]
For an arbitrary $\{E_{k}\}_{k=0}^{m}\subseteq\mathbb{H}$ with Eq.~\eqref{eq:cdes}, let us consider the Hermitian operator
\begin{equation}\label{eq:aech}
H=\frac{1}{2}\left[\eta_{0}\rho_{0}+\eta_{1}\rho_{1}+
E_{0}^{(+)}+E_{0}^{(-)}+E_{1}^{(+)\PT_{1}}+E_{1}^{(-)\PT_{1}}+\cdots+E_{m}^{(+)\PT_{m}}+E_{m}^{(-)\PT_{m}}\right]
\end{equation}
where $E_{k}^{(+)}$ and $E_{k}^{(-)}$ are positive-semidefinite operators satisfying
\begin{equation}\label{eq:eepm}
E_{k}^{(+)}-E_{k}^{(-)}=E_{k}
\end{equation}
for each $k\in\{0,1,\ldots,m\}$.
We note that Eq.~\eqref{eq:cdes} implies
\begin{eqnarray}
H-\eta_{0}\rho_{0}&=&E_{0}^{(-)}+E_{1}^{(-)\PT_{1}}+\cdots+E_{m}^{(-)\PT_{m}},\nonumber\\
H-\eta_{1}\rho_{1}&=&E_{0}^{(+)}+E_{1}^{(+)\PT_{1}}+\cdots+E_{m}^{(+)\PT_{m}}.\label{eq:hdpt}
\end{eqnarray}

Since $E_{k}^{(+)}$ and $E_{k}^{(-)}$ are positive semidefinite for all $k\in\{0,1,\ldots,m\}$, it follows from Eq.~\eqref{eq:rwpt} that
\begin{eqnarray}
E_{0}^{(-)}+E_{1}^{(-)\PT_{1}}+\cdots+E_{m}^{(-)\PT_{m}}&\in&\mathbb{PPT}_{+}^{*},\nonumber\\
E_{0}^{(+)}+E_{1}^{(+)\PT_{1}}+\cdots+E_{m}^{(+)\PT_{m}}&\in&\mathbb{PPT}_{+}^{*}.\label{eq:epms}
\end{eqnarray}
From Eqs.~\eqref{eq:hdpt} and \eqref{eq:epms}, we have
\begin{equation}\label{eq:hinh}
H\in\mathbb{H}_{\sf PPT}(\mathcal{E}),
\end{equation}
where $\mathbb{H}_{\sf PPT}(\mathcal{E})$ is defined in Eq.~\eqref{eq:dhpt}.
Therefore, we have
\begin{equation}\label{eq:pupp}
p_{\sf PPT}(\mathcal{E})\leqslant\Tr H
=\frac{1}{2}\Tr\left[\eta_{0}\rho_{0}+\eta_{1}\rho_{1}+\sum_{k=0}^{m}\left(E_{k}^{(+)}+E_{k}^{(-)}\right)\right]
=\frac{1}{2}+\frac{1}{2}\sum_{k=0}^{m}\Tr|E_{k}|,
\end{equation}
where the inequality follows from Theorem~\ref{thm:ubpt} and Eq.~\eqref{eq:hinh}, the first equality follows from the invariance of trace under partial transposition, and the second equality follows from $E_{k}^{(+)}+E_{k}^{(-)}=|E_{k}|$ for all $k\in\{0,1,\ldots,m\}$.
Since the choice of $\{E_{k}\}_{k=0}^{m}\subseteq\mathbb{H}$ satisfying Eq.~\eqref{eq:cdes} can be arbitrary, Inequality~\eqref{eq:plqw} is true.
\qedhere
\end{proof}

\begin{proof}[Proof of Inequality~\eqref{eq:pgqw}]
From Theorem~\ref{thm:ubpt}, there exists a Hermitian operator $H$ with
\begin{subequations}\label{eq:chop}
\begin{gather}
H\in\mathbb{H}_{\sf PPT}(\mathcal{E}),\label{eq:hiin}\\
p_{\sf PPT}(\mathcal{E})=\Tr H.\label{eq:peqh}
\end{gather}
\end{subequations}
Inclusion~\eqref{eq:hiin} and Eq.~\eqref{eq:rwpt} imply the existence of $\{E_{k}^{(0)},E_{k}^{(1)}\}_{k=0}^{m}\subseteq\mathbb{H}_{+}$ satisfying
\begin{eqnarray}
H-\eta_{0}\rho_{0}&=&E_{0}^{(0)}+E_{1}^{(0)\PT_{1}}+\cdots+E_{m}^{(0)\PT_{m}},\nonumber\\
H-\eta_{1}\rho_{1}&=&E_{0}^{(1)}+E_{1}^{(1)\PT_{1}}+\cdots+E_{m}^{(1)\PT_{m}}.\label{eq:hmee}
\end{eqnarray}

Now, let us define 
\begin{equation}\label{eq:ekek}
E_{k}=E_{k}^{(1)}-E_{k}^{(0)}
\end{equation}
for each $k\in\{0,1,\ldots,m\}$.
We have
\begin{eqnarray}
p_{\sf PPT}(\mathcal{E})=\Tr H
&=&\frac{1}{2}+\frac{1}{2}\Tr\left(2H-\eta_{0}\rho_{0}-\eta_{1}\rho_{1}\right)\nonumber\\
&=&\frac{1}{2}+\frac{1}{2}\sum_{k=0}^{m}\Tr\left(E_{k}^{(0)}+E_{k}^{(1)}\right)
\geqslant\frac{1}{2}+\frac{1}{2}\sum_{k=0}^{m}\Tr\left|E_{k}^{(1)}-E_{k}^{(0)}\right|
=\frac{1}{2}+\frac{1}{2}\sum_{k=0}^{m}\Tr|E_{k}|,\label{eq:ttrh}
\end{eqnarray}
where the first equality follows from Eq.~\eqref{eq:peqh}, the third equality follows from Eq.~\eqref{eq:hmee} together with the invariance of trace under partial transposition, and the inequality follows from the triangle inequality of trace norm.
Moreover, subtracting the two identities in Eq.~\eqref{eq:hmee} shows that $\{E_{k}\}_{k=0}^{m}$ satisfies Eq.~\eqref{eq:cdes}.
Therefore, Inequality~\eqref{eq:ttrh} implies Inequality~\eqref{eq:pgqw}.
\qedhere
\end{proof}

\appsection{Proof of Theorem~\ref{thm:umpt}}\label{app:umpt}
For each pair $(k,k')$ with $1\leqslant k<k'\leqslant m$, Corollary~\ref{cor:tmtp} implies the existence of Hermitian operators $E_{0}^{(k,k')}$ and $E_{1}^{(k,k')}$ satisfying
\begin{subequations}\label{eq:pree}
\begin{gather}
E_{0}^{(k,k')}+E_{1}^{(k,k')\PT_{k}}=\tfrac{1}{2}\sigma_{0}^{(k,k')}-\tfrac{1}{2}\sigma_{1}^{(k,k')},\label{eq:ekkp}\\
p_{\sf PPT}(\mathcal{E}^{(k,k')})=\tfrac{1}{2}+\tfrac{1}{2}\big(\Tr|E_{0}^{(k,k')}|+\Tr|E_{1}^{(k,k')}|\big).\label{eq:kkpt}
\end{gather}
\end{subequations}
Let us define 
\begin{eqnarray}
E_{0}&=&\sum_{l=1}^{m-1}\sum_{l'=l+1}^{m}\sigma_{0}^{(1,2)}\otimes\cdots\otimes\sigma_{0}^{(l,l'-1)}\otimes E_{0}^{(l,l')}\otimes\sigma_{1}^{(l,l'+1)}\otimes\cdots\otimes\sigma_{1}^{(m-1,m)},\nonumber\\
E_{k}&=&\sum_{k'=k+1}^{m}\Big[\sigma_{0}^{(1,2)}\otimes\cdots\otimes\sigma_{0}^{(k,k'-1)}\otimes E_{1}^{(k,k')\PT_{k}}\otimes\sigma_{1}^{(k,k'+1)}\otimes\cdots\otimes\sigma_{1}^{(m-1,m)}\Big]^{\PT_{k}}\label{eq:soek}
\end{eqnarray}
for all $k\in\{1,\ldots,m\}$. 
For this $\{E_{k}\}_{k=0}^{m}$, Eq.~\eqref{eq:cdes} holds because
\begin{eqnarray}
\Lambda_{\mathcal{E}}
&=&\eta_{0}\rho_{0}-\eta_{1}\rho_{1}\nonumber\\
&=&\frac{1}{2}\bigotimes_{\substack{k,k'=1\\ k<k'}}^{m}\sigma_{0}^{(k,k')}-\frac{1}{2}\bigotimes_{\substack{k,k'=1\\ k<k'}}^{m}\sigma_{1}^{(k,k')}\nonumber\\
&=&\sum_{k=1}^{m-1}\sum_{k'=k+1}^{m}\sigma_{0}^{(1,2)}\otimes\cdots\otimes\sigma_{0}^{(k,k'-1)}\otimes(\tfrac{1}{2}\sigma_{0}^{(k,k')}-\tfrac{1}{2}\sigma_{1}^{(k,k')})\otimes\sigma_{1}^{(k,k'+1)}\otimes\cdots\otimes\sigma_{1}^{(m-1,m)}\nonumber\\
&=&\sum_{k=1}^{m-1}\sum_{k'=k+1}^{m}\sigma_{0}^{(1,2)}\otimes\cdots\otimes\sigma_{0}^{(k,k'-1)}\otimes(E_{0}^{(k,k')}+E_{1}^{(k,k')\PT_{k}})\otimes\sigma_{1}^{(k,k'+1)}\otimes\cdots\otimes\sigma_{1}^{(m-1,m)}\nonumber\\
&=&E_{0}+E_{1}^{\PT_{1}}+\cdots+E_{m}^{\PT_{m}},\label{eq:rsoc}
\end{eqnarray}
where the first and second equalities follow from Eqs.~\eqref{eq:lame} and \eqref{eq:ttps}, respectively, the fourth equality follows from Eq.~\eqref{eq:ekkp}, and the last equality follows from the definition of $\{E_{k}\}_{k=0}^{m}$ in Eq.~\eqref{eq:soek}.
We further note that the third equality in Eq.~\eqref{eq:rsoc} follows from the the identity
\begin{equation}\label{eq:xytp}
\bigotimes_{l=1}^{L}X_{l}-\bigotimes_{l=1}^{L}Y_{l}=\sum_{l=1}^{L}X_{1}\otimes\cdots\otimes X_{l-1}\otimes(X_{l}-Y_{l})\otimes Y_{l+1}\otimes\cdots\otimes Y_{L}
\end{equation}
for any operators $X_{1},Y_{1},\ldots,X_{L},Y_{L}$.

For each $k\in\{1,\ldots,m\}$, we have
\begin{eqnarray}
\Tr|E_{k}|&\leqslant&\sum_{k'=k+1}^{m}\Tr\Big|\Big[\sigma_{0}^{(1,2)}\otimes\cdots\otimes\sigma_{0}^{(k,k'-1)}\otimes E_{1}^{(k,k')\PT_{k}}\otimes\sigma_{1}^{(k,k'+1)}\otimes\cdots\otimes\sigma_{1}^{(m-1,m)}\Big]^{\PT_{k}}\Big|\nonumber\\
&=&\sum_{k'=k+1}^{m}\Tr\big|\sigma_{0}^{(1,2)\PT_{k}}\big|\cdots\Tr\big|\sigma_{0}^{(k,k'-1)\PT_{k}}\big|\Tr\big|E_{1}^{(k,k')}\big|\Tr\big|\sigma_{1}^{(k,k'+1)\PT_{k}}\big|\cdots\Tr\big|\sigma_{1}^{(m-1,m)\PT_{k}}\big|\nonumber\\
&=&\sum_{k'=k+1}^{m}\Tr|E_{1}^{(k,k')}|,\label{eq:upek}
\end{eqnarray}
where the inequality follows from the triangle inequality of trace norm, the first equality follows from $\Tr\big|\bigotimes_{l=1}^{L}X_{l}\big|=\prod_{l=1}^{L}\Tr|X_{l}|$ for any Hermitian operators $X_{1},\ldots,X_{L}$, and the second equality follows from the fact that $\sigma_{0}^{(l,l')}$ and $\sigma_{1}^{(l,l')}$ are PPT states for all $l,l'\in\{1,\ldots,m\}$ with $l<l'$.
Using the same argument as in the derivation of Eq.~\eqref{eq:upek}, we have
\begin{eqnarray}
\Tr|E_{0}|&\leqslant&\sum_{l=1}^{m-1}\sum_{l'=l+1}^{m}\Tr\left|\sigma_{0}^{(1,2)}\otimes\cdots\otimes\sigma_{0}^{(l,l'-1)}\otimes E_{0}^{(l,l')}\otimes\sigma_{1}^{(l,l'+1)}\otimes\cdots\otimes\sigma_{1}^{(m-1,m)}\right|\nonumber\\
&=&\sum_{l=1}^{m-1}\sum_{l'=l+1}^{m}\Tr\big|\sigma_{0}^{(1,2)}\big|\cdots\Tr\big|\sigma_{0}^{(l,l'-1)}\big|\Tr\big|E_{0}^{(l,l')}\big|\Tr\big|\sigma_{1}^{(l,l'+1)}\big|\cdots\Tr\big|\sigma_{1}^{(m-1,m)}\big|\nonumber\\
&=&\sum_{l=1}^{m-1}\sum_{l'=l+1}^{m}\Tr|E_{0}^{(l,l')}|.\label{eq:upez}
\end{eqnarray}
Thus, we have
\begin{eqnarray}
p_{\sf PPT}(\mathcal{E})\leqslant\frac{1}{2}+\frac{1}{2}\sum_{k=0}^{m}\Tr|E_{k}|
&\leqslant&\frac{1}{2}+\frac{1}{2}\sum_{k=1}^{m-1}\sum_{k'=k+1}^{m}\big(\Tr|E_{0}^{(k,k')}|+\Tr|E_{1}^{(k,k')}|\big)\nonumber\\
&=&\frac{1}{2}+\sum_{k=1}^{m-1}\sum_{k'=k+1}^{m}\left(p_{\sf PPT}(\mathcal{E}^{(k,k')})-\frac{1}{2}\right),
\label{eq:mpup}
\end{eqnarray}
where the first inequality follows from Theorem~\ref{thm:tspt}, the second inequality follows from Inequalities~\eqref{eq:upek} and \eqref{eq:upez}, and the equality follows from Eq.~\eqref{eq:kkpt}.

\appsection{Construction of orthogonal PPT-state ensembles with near-random PPT discrimination}\label{app:csse}
In this Appendix, we present a method for constructing orthogonal PPT state ensembles whose optimal PPT discrimination probability is arbitrarily close to random guessing. 
To this end, we first recall the notion of a \emph{quantum sequence ensemble}\cite{ha20254,ha20261}.

For a two-party two-state ensemble $\mathcal{E}=\{\eta_{0},\rho_{0};\eta_{1},\rho_{1}\}$ and a positive integer $L$, let us consider the situation of discriminating quantum sequence, each consisting of $L$ quantum states independently prepared from $\mathcal{E}$.
The preparation of such a quantum sequence proceeds step by step.
At the first step, a state $\rho_{s_{1}}$ is prepared from the ensemble $\mathcal{E}$ with the corresponding probability $\eta_{s_{1}}$. 
At each subsequent step $l=2,\ldots,L$, a state $\rho_{s_{l}}$ is independently prepared from the ensemble $\mathcal{E}$ with the corresponding probability $\eta_{s_{l}}$. 
Consequently, the quantum sequence
\begin{equation}\label{eq:qseq}
(\rho_{s_{1}},\ldots,\rho_{s_{L}})
\end{equation}
is prepared with the probability
\begin{equation}\label{eq:jitp}
\eta_{s_{1}}\times\cdots\times\eta_{s_{L}}
\end{equation}
for $(s_{1},\ldots,s_{L})\in\mathbb{Z}_{2}^{L}$ where $\mathbb{Z}_{2}^{L}$ denotes the Cartesian product of $L$ copies of $\{0,1\}$.

The quantum sequence in Eq.~\eqref{eq:qseq}, together with its preparation probability in Eq.~\eqref{eq:jitp}, can equivalently be represented by the tensor-producted state $\rho_{\vec{s}}$ with the joint probability $\eta_{\vec{s}}$,
\begin{equation}\label{eq:ebrb}
\eta_{\vec{s}}=\prod_{l=1}^{L}\eta_{s_{l}},~\rho_{\vec{s}}=\bigotimes_{l=1}^{L}\rho_{s_{l}},
\end{equation}
for $\vec{s}=(s_{1},\ldots,s_{L})\in\mathbb{Z}_{2}^{L}$. 
This representation gives rise to the \emph{quantum sequence ensemble}
\begin{equation}\label{eq:lfeb}
\mathcal{E}^{\otimes L}=\{\eta_{\vec{s}},\rho_{\vec{s}}\}_{\vec{s}\in\mathbb{Z}_{2}^{L}}.
\end{equation}
We note that the quantum sequences of $\mathcal{E}^{\otimes L}$ are PPT whenever the states $\rho_{0}$ and $\rho_{1}$ of the ensemble $\mathcal{E}$ are PPT, and are mutually orthogonal whenever $\rho_{0}$ and $\rho_{1}$ are orthogonal.

Now, let us consider the situation of guessing the parity $\omega_{2}(\vec{s})$ for a quantum sequence $\rho_{\vec{s}}$ prepared from the ensemble $\mathcal{E}^{\otimes L}$ in Eq.~\eqref{eq:lfeb}.
Here, $\omega_{2}(\vec{s})$ denotes the modulo-$2$ sum of $s_{1},\ldots,s_{L}$, that is,
\begin{equation}\label{eq:mtsb}
\omega_{2}(\vec{s})=\sum_{l=1}^{L}s_{l}\pmod 2,
\end{equation}
for $\vec{s}=(s_{1},\ldots,s_{L})\in\mathbb{Z}_{2}^{L}$.
This situation is equivalent to discriminating the states from the two-state ensemble 
\begin{equation}\label{eq:cgeb}
\mathcal{E}^{(L)}=\{\eta_{0}^{(L)},\rho_{0}^{(L)};\eta_{1}^{(L)},\rho_{1}^{(L)}\}
\end{equation}
where each state $\rho_{i}^{(L)}$ is prepared with probability $\eta_{i}^{(L)}$, and
\begin{equation}\label{eq:cgsp}
\eta_{i}^{(L)}=\sum_{\substack{\vec{s}\in\mathbb{Z}_{2}^{L}\\ \omega_{2}(\vec{s})=i}}\eta_{\vec{s}},~
\rho_{i}^{(L)}=\frac{1}{\eta_{i}^{(L)}}\sum_{\substack{\vec{s}\in\mathbb{Z}_{2}^{L}\\ \omega_{2}(\vec{s})=i}}\eta_{\vec{s}}\rho_{\vec{s}},
\end{equation}
for $i\in\{0,1\}$.

The following proposition provides a sufficient condition under which the optimal PPT discrimination of $\mathcal{E}^{(L)}$ in Eq.~\eqref{eq:cgeb} converges exponentially to random guessing as $L$ increases\cite{ha20252,ha20253}.
\begin{proposition}\label{pro:scpt}
For a two-party two-state ensemble $\mathcal{E}=\{\eta_{0},\rho_{0};\eta_{1},\rho_{1}\}$, if there exists $H\in\mathbb{H}$ satisfying
\begin{equation}\label{eq:idsc}
H+H^{\PT_{1}}=\Lambda_{\mathcal{E}},~
\Tr|H|+\Tr|H^{\PT_{1}}|\leqslant 1,~
\Tr|H|<\tfrac{1}{2}
\end{equation}
where $\Lambda_{\mathcal{E}}$ is defined in Eq.~\eqref{eq:lame}, then 
\begin{equation}\label{eq:smup}
p_{\mathsf{PPT}}(\mathcal{E}^{(L)})\leqslant\tfrac{1}{2}+\tfrac{1}{2}\big(4\Tr|H|\Tr|H^{\PT_{1}}|\big)^{\frac{L}{2}}
\end{equation}
for any positive integer $L$.
In this case, $p_{\mathsf{PPT}}(\mathcal{E}^{(L)})$ converges exponentially to $1/2$ as $L$ increases.
\end{proposition}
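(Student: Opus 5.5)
We plan to apply Corollary~\ref{cor:tmtp} to the two-party ensemble $\mathcal{E}^{(L)}$. This requires an explicit decomposition $\Lambda_{\mathcal{E}^{(L)}}=E_{0}+E_{1}^{\PT_{1}}$ with $\Tr|E_{0}|+\Tr|E_{1}|\leqslant(4\Tr|H|\Tr|H^{\PT_{1}}|)^{L/2}$.

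\emph{Step 1: factorize the difference operator.} Write $\Lambda:=\Lambda_{\mathcal{E}}=\eta_{0}\rho_{0}-\eta_{1}\rho_{1}$. Grouping the sequences in Eq.~\eqref{eq:cgsp} by parity gives
\[
\Lambda_{\mathcal{E}^{(L)}}=\sum_{\vec{s}}(-1)^{\omega_{2}(\vec{s})}\eta_{\vec{s}}\rho_{\vec{s}}=\Lambda^{\otimes L},
\]
where the tensor product is across the $L$ copies. Each copy is bipartite, and party $A_{1}$ holds all first factors.

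\emph{Step 2: expand and split the terms.} Substitute $\Lambda=H+H^{\PT_{1}}$ and expand:
\[
\Lambda^{\otimes L}=\sum_{T\subseteq\{1,\ldots,L\}}X_{T},
\]
where $X_{T}$ has $H^{\PT_{1}}$ in the positions of $T$ and $H$ elsewhere. The global partial transpose $\PT_{1}$ acts factorwise, so $X_{T}^{\PT_{1}}=X_{T^{c}}$. For each term we therefore choose the representation with fewer transposed factors.
\begin{itemize}
\item Terms with $|T|<L/2$ are assigned to $E_{0}$.
\item Terms with $|T|>L/2$ are written as $(X_{T^{c}})^{\PT_{1}}$, and $X_{T^{c}}$ is assigned to $E_{1}$.
\item When $L$ is even and $|T|=L/2$, each term is split half into $E_{0}$ and half into $E_{1}$.
\end{itemize}

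\emph{Step 3: bound the trace norms.} Set $a=\Tr|H|$ and $b=\Tr|H^{\PT_{1}}|$, so that $a<\frac{1}{2}$ and $a+b\leqslant1$. By the triangle inequality and multiplicativity of the trace norm under tensor products,
\[
\Tr|E_{0}|+\Tr|E_{1}|\leqslant\sum_{j=0}^{L}\binom{L}{j}a^{\max(j,L-j)}b^{\min(j,L-j)}.
\]
Here each term contributes $a$ for its untransposed factors and $b$ for its transposed factors, and $j$ counts the factors that keep $H$.

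\emph{Step 4: the main obstacle.} We must show this sum is at most $(4ab)^{L/2}$, up to the harmless additive constant accounted for by the $\frac12$ in Corollary~\ref{cor:tmtp}; the target bound also forces us to track that constant carefully. First we would try the estimate
\[
a^{\max(j,L-j)}b^{\min(j,L-j)}\leqslant(ab)^{L/2}(a/b)^{|j-L/2|}.
\]
This helps when $a\leqslant b$, which holds because $b\geqslant|\Tr H|$ and $a+b\geqslant\Tr\Lambda$ relate suitably. Since $\sum_{j}\binom{L}{j}=2^{L}$, we would get at most $2^{L}(ab)^{L/2}=(4ab)^{L/2}$.

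If $a\leqslant b$ fails to follow directly from the hypotheses, the fallback is to use $a<\frac12$ together with $a+b\leqslant1$. Then $ab<\frac14$, and we would bound the worst-case term ratio by $1$ after choosing which operator to call $H$ versus $H^{\PT_{1}}$, exploiting the symmetry of the decomposition. Once the sum is controlled, Corollary~\ref{cor:tmtp} gives Eq.~\eqref{eq:smup}. Exponential convergence then follows because $4ab<1$: from $a<\frac12$ and $a+b\leqslant1$ we get $b\leqslant1-a$, so $4ab\leqslant4a(1-a)<1$, since $a(1-a)<\frac14$ for $a\neq\frac12$.
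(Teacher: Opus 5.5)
Your argument is sound in its core. The paper does not prove Proposition~\ref{pro:scpt} itself; it imports it from \cite{ha20252,ha20253}. What you give is therefore a self-contained proof in the spirit of Appendix~\ref{app:umpt}. It uses $\Lambda_{\mathcal{E}^{(L)}}=\Lambda_{\mathcal{E}}^{\otimes L}$, the identity $X_{T}^{\PT_{1}}=X_{T^{c}}$, the triangle inequality, multiplicativity of the trace norm, and only the easy (weak-duality) direction of Corollary~\ref{cor:tmtp}. There is no additive constant to track: any admissible pair gives $p_{\sf PPT}(\mathcal{E}^{(L)})\leqslant\tfrac12+\tfrac12(\Tr|E_{0}|+\Tr|E_{1}|)$ directly.

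\textbf{The step that fails.} As written, your primary justification of $a\leqslant b$ does not hold, because the hypotheses do not imply it. The inequalities you cite say nothing about the order of $a$ and $b$. Moreover, if some $H$ satisfies Condition~\eqref{eq:idsc} with $a<b<\tfrac12$ (possible when $\rho_{0},\rho_{1}$ are not orthogonal), then $H^{\PT_{1}}$ satisfies it too, with $a$ and $b$ exchanged. When $a>b$, your fixed rule of keeping the representation with fewer transposed factors misses the target already at $L=1$: it yields $2a>2\sqrt{ab}$.

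\textbf{The fix.} The relabeling you call a fallback must be the actual argument. Both $H+H^{\PT_{1}}=\Lambda_{\mathcal{E}}$ and $(4ab)^{L/2}$ are invariant under $H\leftrightarrow H^{\PT_{1}}$. Hence without loss of generality $a\leqslant b$, and then $(a/b)^{|j-L/2|}\leqslant1$ finishes Step~4.

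Cleaner still, for each $T$ with $|T|=t$, put the cheaper of $X_{T}$ (into $E_{0}$) and $X_{T^{c}}$ (into $E_{1}$). Since $\min(a^{L-t}b^{t},a^{t}b^{L-t})\leqslant\sqrt{a^{L-t}b^{t}\,a^{t}b^{L-t}}=(ab)^{L/2}$, summing over the $2^{L}$ subsets gives $\Tr|E_{0}|+\Tr|E_{1}|\leqslant(4ab)^{L/2}$ with no case distinction.

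The hypotheses $a+b\leqslant1$ and $a<\tfrac12$ are then used only to get $4ab<1$, exactly as in your last sentence. In the orthogonal setting where the paper applies the proposition, $a+b\geqslant\Tr|\Lambda_{\mathcal{E}}|=1$ forces $b=1-a>\tfrac12>a$. So your original rule does work there, e.g.\ $a=\tfrac{5}{12}$, $b=\tfrac{7}{12}$ in Example~\ref{ex:osse}. It just does not cover the proposition as stated.
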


Since the states $\rho_{0}^{(L)}$ and $\rho_{1}^{(L)}$ in Eq.~\eqref{eq:cgsp} are mixtures of quantum sequences from $\mathcal{E}^{\otimes L}$ in Eq.~\eqref{eq:lfeb}, it follows that $\rho_{0}^{(L)}$ and $\rho_{1}^{(L)}$ are orthogonal PPT states whenever $\rho_{0}$ and $\rho_{1}$ are orthogonal PPT states.
Therefore, starting from any orthogonal PPT state ensemble $\mathcal{E}$ satisfying Condition~\eqref{eq:idsc} of Proposition~\ref{pro:scpt}, one can construct an orthogonal PPT state ensemble $\mathcal{E}^{(L)}$ whose optimal PPT discrimination probability approaches $\tfrac{1}{2}$ as $L$ increases.

The following example provides an orthogonal separable state ensemble satisfying Condition~\eqref{eq:idsc} of Proposition~\ref{pro:scpt}. 
\begin{example}[\cite{ha20252}]\label{ex:osse}
Let us consider the $3\otimes 3$ orthogonal separable state ensemble $\mathcal{E}=\{\eta_{0},\rho_{0};\eta_{1},\rho_{1}\}$ consisting of
\begin{align}\label{eq:qdde}
\eta_{0}=\frac{1}{2},~
\rho_{0}=\frac{1}{4}\sum_{i=0}^{3}&\ket{\alpha_{i}}\!\bra{\alpha_{i}}\otimes\ket{\alpha_{i}}\!\bra{\alpha_{i}},\nonumber\\
\eta_{1}=\frac{1}{2},~
\rho_{1}=\frac{1}{6}\sum_{i=0}^{2}&\big(\ket{\beta_{i}^{+}}\!\bra{\beta_{i}^{+}}\otimes\ket{\beta_{i}^{-}}\!\bra{\beta_{i}^{-}}
+\ket{\beta_{i}^{-}}\!\bra{\beta_{i}^{-}}\otimes\ket{\beta_{i}^{+}}\!\bra{\beta_{i}^{+}}\big)
\end{align}
where
\begin{align}\label{eq:qddf}
\ket{\alpha_{0}}=\tfrac{1}{\sqrt{3}}(\ket{0}+\ket{1}+\ket{2}),&~
\ket{\beta_{0}^{\pm}}=\tfrac{1}{\sqrt{2}}(\ket{0}\pm\ket{1}),\nonumber\\
\ket{\alpha_{1}}=\tfrac{1}{\sqrt{3}}(\ket{0}-\ket{1}-\ket{2}),&~
\ket{\beta_{1}^{\pm}}=\tfrac{1}{\sqrt{2}}(\ket{1}\pm\ket{2}),\nonumber\\
\ket{\alpha_{2}}=\tfrac{1}{\sqrt{3}}(\ket{0}-\ket{1}+\ket{2}),&~
\ket{\beta_{2}^{\pm}}=\tfrac{1}{\sqrt{2}}(\ket{2}\pm\ket{0}),\nonumber\\
\ket{\alpha_{3}}=\tfrac{1}{\sqrt{3}}(\ket{0}+\ket{1}-\ket{2}).&~
\end{align}
Note that the states $\rho_{0}$ and $\rho_{1}$ are PPT because every separable state is a PPT state.
\end{example}

By using the result for the ensemble $\mathcal{E}$ in Eq.~\eqref{eq:qdde}\cite{ha20252}, there exists a Hermitian operator $H$ satisfying Eq.~\eqref{eq:idsc} and
\begin{equation}\label{eq:etnr}
\Tr|H|=\tfrac{5}{12},~\Tr|H^{\PT}|=\tfrac{7}{12}.
\end{equation}
From Proposition~\ref{pro:scpt} and Eq.~\eqref{eq:etnr}, we have
\begin{equation}\label{eq:emup}
p_{\mathsf{PPT}}(\mathcal{E}^{(L)})\leqslant\tfrac{1}{2}+\tfrac{1}{2}\big(\tfrac{35}{36}\big)^{\frac{L}{2}}
\end{equation}
for any positive integer $L$.
Thus, the ensemble $\mathcal{E}$ in Example~\ref{ex:osse} can be used to construct orthogonal PPT state ensembles whose optimal PPT discrimination is arbitrarily close to random guessing.

 
\end{document}